\documentclass[11pt,a4paper]{article}

\usepackage[margin=1in]{geometry}
\usepackage{amsmath,amssymb,amsthm}
\usepackage{mathtools}
\usepackage{microtype}
\usepackage{hyphenat}
\usepackage[colorlinks=true,linkcolor=blue,citecolor=blue,urlcolor=blue]{hyperref}

\newtheorem{theorem}{Theorem}
\newtheorem{lemma}[theorem]{Lemma}
\newtheorem{proposition}[theorem]{Proposition}

\theoremstyle{remark}
\newtheorem{remark}[theorem]{Remark}

\newcommand{\dd}{\mathrm{d}}
\newcommand{\Real}{\mathbb{R}}
\newcommand{\Pb}{\widehat{P}}
\newcommand{\fh}{\widehat{f}}
\newcommand{\jh}{\widehat{\jmath}}
\newcommand{\ft}{\widetilde{f}}
\newcommand{\gam}{\gamma}
\newcommand{\Cstar}{c^{\!*}}
\newcommand{\Astar}{a^{\!*}}

\title{A proof of an identity for the critical exponents of jamming}

\author{Giorgio Parisi$^{1,2,3}$ and Francesco Zamponi$^3$}

\date{
{\it $^1$International Research Center of Complexity Sciences, Hangzhou International Innovation Institute, Beihang University, Hangzhou 311115, China\\
$^2$CNR--Nanotec, Rome Unit, and INFN--Sezione di Roma 1, 00185 Rome, Italy\\
$^3$Dipartimento di Fisica, Sapienza Università di Roma, 00185 Rome, Italy}
}

\begin{document}
\maketitle

\begin{abstract}
\noindent
Within the full replica-symmetry-breaking (fullRSB) solution of dense hard
spheres in infinite dimension, Charbonneau, Kurchan, Parisi, Urbani, and
Zamponi (CKPUZ; J.~Stat.~Mech.\ P10009, 2014) introduced three critical
exponents $a$, $b$, $c$ governing the matching region of the fullRSB
profile near the jamming transition. 
These exponents satisfy two scaling relations. The first, $b=(1+c)/2$, was established analytically by the diffusion-drift
balance in the scaling ansatz.
The second, $a+b=1$, was observed numerically to arbitrary precision but could not be proven. 
The exponents $a,b,c$ of the scaling fullRSB ansatz are related to the physical exponents $\alpha, \theta, \kappa$ 
that control the gap, force, and overlap distributions
by the relations $\alpha=a/b$, $\theta=(c-a)/(b-c)$, $\kappa=c+1$.
Crucially, the relation $a+b=1$ yields the
scaling relations $\alpha=1/(2+\theta)$ and $\kappa=2-2/(3+\theta)$
predicted on independent grounds by the mechanical-marginal-stability
arguments of Wyart and collaborators. 
Here, we give an analytic proof of the
identity $a+b=1$ from the scaling fullRSB equations. The proof was obtained through interaction
with Claude (Sonnet 4.6 and Opus 4.7) and verified by us.
\end{abstract}

\section{Introduction and motivation}
\label{sec:intro}

\subsection{Jamming, marginal stability, and critical exponents}

Amorphous packings of hard or soft spheres exhibit a non-equilibrium
``jamming" transition at the density above which mechanical rigidity
sets in~\cite{LiuNagel,OHern,Wyart05,VanHecke}. Near this transition, a
number of physical observables show power-law critical behavior, with
exponents whose values are remarkably universal across spatial
dimensions. The long-time limit of the mean square displacement 
for thermal hard spheres scales as $\Delta_{\mathrm{EA}}\sim p^{-\kappa}$
with reduced pressure $p$ near jamming where $p\to\infty$.
At jamming, the radial distribution function diverges as
$g(r)\sim(r-D)^{-\alpha}$ when the pair distance $r$ approaches the contact diameter $D$, 
and the force distribution vanishes as $P(f)\sim f^{\theta}$ for small inter-particle forces $f$.

The infinite-dimensional ($d\to\infty$) hard-sphere theory of
Charbonneau, Kurchan, Parisi, Urbani and Zamponi
(CKPUZ~\cite{CKPUZ-I,CKPUZ-II,CKPUZ-III}) provides an exact mean-field
description of jamming and predicts these exponents analytically. Within the
fullRSB Ansatz~\cite{Parisi,MPV} the calculation reduces, in the
appropriate scaling regime near jamming, to a non-linear boundary-value
problem for an auxiliary function $J(t)$, coupled to a linear eigenvalue equation for
a second function $p_1(t)$. Three scaling exponents $a,b,c$ are determined by the
asymptotics of these equations and by an integral
``marginal-stability" identity that arises from differentiating the
replicon-zero condition along the fullRSB profile~\cite{CKPUZ-III}.

The
exponents $a,b,c$ are related to the physical exponents by
$\alpha=a/b$, $\theta=(c-a)/(b-c)$ and $\kappa=1+c$.
CKPUZ~\cite[Sec.~IX]{CKPUZ-III} obtained the numerical values
\begin{equation}\label{eq:numvals}\begin{split}
&a = 0.29213\ldots,\;\; b = 0.70787\ldots,\;\; c = 0.41574\ldots,\;\; \\
&\alpha = 0.41269\ldots,\;\; \theta = 0.42311\ldots,\;\; \kappa = 1.41574\ldots,
\end{split}\end{equation}
and remarked that, within numerical precision, these satisfy
\begin{equation}\label{eq:apb}
a + b = 1,
\end{equation}
a relation they could not prove. Combined with the
relation $b=(1+c)/2$ (which follows from the diffusion-drift
balance in the scaling ansatz), Eq.~\eqref{eq:apb} is equivalent to
\begin{equation}\label{eq:apc}
a + \tfrac{c}{2} = \tfrac{1}{2} \ .
\end{equation}
The interest of Eq.~\eqref{eq:apc} is that, written in terms of the physical exponents, it leads to the scaling relations
\begin{equation}\label{eq:wyart}
\alpha = \frac{1}{2+\theta},\qquad \kappa = 2 - \frac{2}{3+\theta},
\end{equation}
which were derived independently by Wyart and collaborators on the basis of the marginal mechanical stability of the contact
network at jamming~\cite{Wyart12,Lerner13,DeGiuli14}. 
The
relations in Eq.~\eqref{eq:wyart} are crucial conceptually:\ they tie the
phase-space marginal stability of fullRSB (a property of the fullRSB
saddle point) to the mechanical marginal stability of the underlying
packing (a property of the network of forces). Proving Eq.~\eqref{eq:apc} therefore proves the equivalence of these two notions of marginality
within the $d\to\infty$ theory.

\subsection{Strategy of the proof}

In this work, we present a proof of $a+b=1$. The proof was obtained by interaction with
Claude Sonnet 4.6 and Opus 4.7. The model Opus 4.7 essentially derived the proof by itself, with minimal supervision
from us. In the first part of the conversation, the model had to study the differential equations numerically and produce a C++ code that finds the solution. The aim was to have a high-precision verification of the conjecture. Only at the end did we ask for an analytic proof of the result. The conversation contains 40 prompts.

We checked the proof carefully and pointed out some inconsistencies in an early version,
which the model corrected by itself. Sonnet 4.6 was used to further refine some minor steps.
We asked the model to write down the proof, which it did in mathematical language. We checked and
edited the resulting draft to improve accuracy and readability. We eliminated some parts that were, in our opinion,  not necessary for the proof and were obscure.
We deposited the full text of the conversations with the model in a Zenodo repository~\cite{zenodo}.

Let us recap the conjecture.  We use $\theta\equiv\theta(-t)$ throughout for the Heaviside step
(equal to~$1$ for $t<0$ and~$0$ for $t>0$); we exploit
$\theta^2=\theta$ on each open half-line.
\begin{itemize}
\item Our main non-linear ODE in the range $-\infty<t<\infty$ is :
\begin{equation}\label{eq:Jeq}
\frac{c}{2}\,J''(t) = \Big[\!-\! b+c\theta\Big] t\,J'(t)
+ J(t) - \theta + \tfrac{c^{2}}{4}\,J'(t)^{2},
\quad J(-\infty)=1,\; J(+\infty)=0.
\end{equation}
The boundary conditions for $J(\pm\infty)$ are imposed by matching to the
left and right asymptotic regimes of the fullRSB solution (see Sec.~\ref{sec:setup} below).
\item We define a linear eigenvalue equation for a function $p_{1}(t)$:
\begin{equation}\label{eq:p1eq}
\frac{c}{2}\,p_{1}''(t) = \bigl(a-c\theta\bigr) p_{1}(t)
+ \bigl(b-c\theta\bigr)\,t\,p_{1}'(t)
- \tfrac{c^{2}}{2}\,\bigl(p_{1}'(t)J'(t) + p_{1}(t)J''(t)\bigr) .
\end{equation}
The eigenvalue $a$ is fixed 
(at given $c$) 
by the requirement that $p_{1}(t)\propto (-t)^{\theta}$ as $t\to-\infty$
and $p_{1}(t)\propto t^{-\alpha}$ as $t\to+\infty$, with
\(\theta = (c-a)/(b-c)\) and \(\alpha = a/b\), where $b=(c+1)/2$.
The exponents
$\theta$ and $\alpha$ are required to be positive. 
In our specific case, $p_1(t)$ must be positive
because $p_1$ physically originates from a probability.

This
selection determines $a=a(c)$.
\item 
The exponent $c$ in turn is fixed (independently of the eigenvalue
condition for~$p_{1}$) by the integral identity~\cite[Eq.~(199)]{CKPUZ-III},
\begin{equation}
    \frac12 = \frac{
\int_{-\infty}^\infty d t \, p_1(t) \left[ f(t) \right]^2 \left[ 1- f(t) \right]
}{
\int_{-\infty}^\infty d t \, p_1(t) \left[ \frac{ d}{d t} \left(f(t) \right) \right]^2
} \,,
\end{equation}
where we defined
\[
f(t)\equiv \theta(-t) + \frac{c}{2} J''(t).
\]

This condition arises from the differentiated
marginal-stability condition discussed in Sec.~\ref{sec:fg-test} below.  If this condition is satisfied, we found  $a+b=1$ with high numerical precision. We had verified (but never published) that this identity also holds if we consider a non-positive $p_1(t)$ function, with only one node.
\end{itemize}

The proof is organized in four steps.

\begin{enumerate}\itemsep0.4em
\item  \textbf{The eigenvalue condition revisited (Sec.~\ref{sec:proof}).}

The previous equation for $p_1$ could be written as ${\cal D} p_1=0$, where   ${\cal D}$ is a differential operator. 
This implies that for any differentiable function $\xi(t)$ such that the integrals converge
\begin{equation}
 0=   \int dt \xi(t) {\cal D} p_1(t)= \int dt  p_1(t) {\cal D^*}\xi(t) ,
\end{equation}
where ${\cal D^*}$ is the transpose of the operator ${\cal D}$.  The identity can be derived trivially by integration by parts.  
The crucial idea is to use it to generate a useful identity with an appropriate choice of $\xi(t)$.

\item \textbf{An algebraic identity
(Sec.~\ref{sec:fg-origin}, \ref{sec:lemma3}, \ref{sec:proof}).}

We show in Sec.~\ref{sec:fg-origin} that 
the function $f$ inherits from $J(t)$ an
explicit ODE
(Lemma~\ref{lem:f}, Sec.~\ref{sec:lemma3}). Using $f$  to
build the test function $\xi := f(1-f)$, integrating $\xi$ against the
eigenvalue Eq.~\eqref{eq:p1eq} for $p_1(t)$, and substituting Lemma~\ref{lem:f}
leads — after two non-trivial but explicit cancellations — to the
algebraic identity
\begin{equation}\label{eq:algid}
c(D-2N) =  (b-a-c)K= (1-a-b) K ,
\end{equation}
where $K=\int p_1 f(1-f)\,dt$, $N=\int p_1 f^2 (1-f)\,dt$, $D=\int p_1 (f')^2\,dt$.
Crucially, Eq.~\eqref{eq:algid} holds for \emph{any} value of the eigenvalue
$a$ in the linear problem.

\item \textbf{The marginal-stability identity
(Sec.~\ref{sec:fg-test}).}

The identity that expresses marginal stability~\cite[Eq.(199)]{CKPUZ-III} reads
$N/D=1/2$ in our notation, i.e., $D=2N$. 
This identity
descends from the vanishing replicon condition~\cite[Eq.(193)]{CKPUZ-III}, i.e., the
marginal-stability of the fullRSB phase~\cite{DeDominicis}. Eq.~\eqref{eq:algid} then gives $(a+b-1)K=0$, so that we have only to prove that $K\neq 0$ to arrive at the conclusions. We will actually prove that $K>0$ for the zero node solution for $p_1(t)$ 
but it is likely that $K\neq 0$ holds generically for the other eigenmodes with nodes.

\item \textbf{Selection of the physical branch via a Fisher--KPP
PDE (Sec.~\ref{sec:fkpp}).}

The only tricky point is to verify that $K$ is non-zero. Numerically, we find $0<f(t)<1$, so there is no problem in concluding that $K>0$. 

In order to have a self-consistent analytical proof, we have to work more.
The non-linear Eq.~\eqref{eq:Jeq} admits a discrete family of
solutions, classified by the nodal count of $f$. The physically
relevant branch is the no-node one ($0\le f\le 1$), and this selection
must be made before one can conclude $K>0$ in Eq.~\eqref{eq:algid}. 
The selection is realized at the level of the underlying fullRSB flow by
introducing the two-variable extension
\[
F(y,h) := \theta(-h) - \gamma(y)\,\jh''(y,h),
\]
which reduces to $f(t)$ in the scaling regime. $F$
satisfies a \emph{Fisher--KPP-type reaction--diffusion
equation}~\cite{Fisher,KPP}
\[
\partial_y F = \frac{\dot\gam}{2y}\,F'' + \dot\gam\Bigl(\jh_h-\frac{h\theta}{\gam}\Bigr)F_h + \frac{\dot\gam}{\gam}\,F(1-F),
\]
whose reaction term factors as $F(1-F)$ thanks to the identity
$\theta^2=\theta$ (the Heaviside is a Boolean indicator). The initial
condition $F(1/m,h)=\theta(-h)\in\{0,1\}$ is in $[0,1]$ pointwise, and
the parabolic maximum principle then preserves the bounds along the
entire flow~\cite{ProtterWeinberger,Evans}. Passing to the scaling limit gives $0\le f\le 1$ pointwise. With this in hand, $K>0$
follows from the intermediate value theorem applied to $f$ and
positivity of the ground-state eigenfunction $p_1$.
\end{enumerate}

Combining Eq.~\eqref{eq:algid}, the marginal-stability identity $D=2N$,
and $K>0$ gives $(1-a-b)K=0$, hence $a+b=1$, equivalent
to Eq.~\eqref{eq:apc}, hence completing the proof.

\subsection{What is and is not in this paper}

We work in the scaling-region matching framework of CKPUZ~\cite[Sec.~IX]{CKPUZ-III}, taking as given:\ the derivation of the
scaling equations~\eqref{eq:Jeq}, \eqref{eq:p1eq} from the fullRSB
PDEs~\cite[Eq.~(116)]{CKPUZ-III} for $\jh(y,h)$ and $\Pb(y,h)$ at large $y$; the identity
$b=(1+c)/2$; the marginal-stability identities~\cite[Eqs.~(193) and (199)]{CKPUZ-III}. 
What is added here is (i) the algebraic identity Eq.~\eqref{eq:algid}
linking $a,b,c$ to the integrals $K$, $N$, $D$ via the
test function $\xi=f (1-f)$, and (ii) the Fisher--KPP
identification that closes the existence/positivity question for the
physical branch at the level of the fullRSB flow.

The existence and uniqueness of the fullRSB profile itself, the
convergence of the scaling expansion at fixed $c$, and the rigorous
existence of the matching region as a single asymptotic regime
are taken as given. They are well established
numerically~\cite{CKPUZ-III} but a fully rigorous
construction in the mathematical sense would require extending the
existing literature on the Sherrington--Kirkpatrick
model~\cite{Talagrand,Panchenko} to the hard-sphere
setting, which is beyond the scope of this note.

\section{From the kRSB equations to the matching-region scaling}
\label{sec:setup}

We briefly recall the derivation of the scaling equations
from~\cite[Sec.~IX]{CKPUZ-III}, in order to fix notation. The starting
point is the continuum fullRSB equations~\cite[Eq.~(116)]{CKPUZ-III} in scaled variables
$y=m_i/m\in[1,\infty)$ and $h\in \mathbb{R}$, in the jamming limit $m\to 0$:
\begin{align}
\label{116a}
\frac{\partial \jh(y,h)}{\partial y}
&= \frac{1}{2}\,\frac{\dot\gam(y)}{y}\!\left[
-\frac{\theta(-h)}{\gam(y)}
+ \jh''(y,h)
- 2y\,\frac{h\theta(-h)}{\gam(y)}\,\jh'(y,h)
+ y\,\bigl(\jh'(y,h)\bigr)^{2}
\right],\\[1mm]
\label{116b}
\frac{\partial \Pb(y,h)}{\partial y}
&= -\frac{1}{2}\,\frac{\dot\gam(y)}{y}\,e^{-h}
\bigg\{
\bigl[e^{h}\Pb(y,h)\bigr]''
- 2y\,\frac{\partial}{\partial h}\!\left[
e^{h}\Pb(y,h)\!\left(-\frac{h\theta(-h)}{\gam(y)}+\jh'(y,h)\right)
\right]
\bigg\},\\[1mm]
\label{116c}
\frac{1}{\gam(y)} &= y\,\kappa(y) - \int_{1}^{y}\dd z\,\kappa(z),\qquad
\kappa(y)=\frac{\widehat\varphi}{2}\!\int\dd h\,e^{h}\Pb(y,h)\!\left[
-\frac{h\theta(-h)}{\gam(y)}+\jh'(y,h)\right]^{\!2},
\end{align}
where primes and dots denote, respectively, partial derivatives with respect to $h$ and $y$.
Here $\Pb(y,h)$ is a rescaled, weighted probability density of $h$, and $\gam(y)$ encodes the profile
of mean-square displacements. The explicit Heaviside
$\theta(-h)$ coming from the hard-sphere potential
distinguishes these equations from the
spin-glass case~\cite{Parisi,MPV}.

The fullRSB profile is sought with the power-law form $\gam(y)\sim
\gam_{\infty}y^{-c}$ for $y$ large, with critical exponent $c\in(0,1)$
to be determined~\cite[Sec.~IX A]{CKPUZ-III}. The matching region
between the asymptotic behavior of $\Pb$ for positive and negative $h$
occurs at $|h|\sim y^{-b}$. Inserting the scaling ansatz
\begin{equation}\label{eq:scaling}
\jh(y,h) \;=\; -\frac{c}{2y}\,J(t),\qquad
\Pb(y,h) \;=\; y^{a}\,p_{1}(t),\qquad
t \;=\; \frac{h\,y^{b}}{\sqrt{\gam_{\infty}}},
\quad b=\tfrac{1+c}{2},
\end{equation}
into Eq.~\eqref{116a} yields the equation for $J$, our main non-linear ODE in Eq.~\eqref{eq:Jeq}, again with $\theta\equiv\theta(-t)$:
\begin{equation}
\frac{c}{2}\,J''(t) = \Big[\!-\! b+c\theta\Big] t\,J'(t)
+ J(t) - \theta + \tfrac{c^{2}}{4}\,J'(t)^{2},
\quad J(-\infty)=1,\; J(+\infty)=0.
\end{equation}
The
boundary conditions for $J(\pm\infty)$ are imposed by matching to the
left and right asymptotic regimes.
Inserting the same scaling into Eq.~\eqref{116b} yields the linear eigenvalue
equation for $p_{1}$ in Eq.~\eqref{eq:p1eq},
\begin{equation}
\frac{c}{2}\,p_{1}''(t) = \bigl(a-c\theta\bigr) p_{1}(t)
+ \bigl(b-c\theta\bigr)\,t\,p_{1}'(t)
- \tfrac{c^{2}}{2}\,\bigl(p_{1}'(t)J'(t) + p_{1}(t)J''(t)\bigr) .
\end{equation}
The eigenvalue $a$ is fixed 
(at given $c$) 
by the requirement that $p_{1}\propto (-t)^{\theta}$ as $t\to-\infty$
and $p_{1}\propto t^{-\alpha}$ as $t\to+\infty$, with
\(\theta = (c-a)/(b-c)\) and \(\alpha = a/b\). 
The exponents
$\theta$ and $\alpha$ are required to be positive, and
in both cases,  the proportionality constants need to be positive
because $p_1$ physically originates from a probability.

This
selection determines $a=a(c)$.
The exponent $c$ in turn is fixed (independently of the eigenvalue
condition for~$p_{1}$) by the integral identity~\cite[Eq.~(199)]{CKPUZ-III}, which arises from the differentiated
marginal-stability condition discussed in Sec.~\ref{sec:fg-test} below.

\begin{remark}
[$p_1\in C^2(\mathbb{R})$]
\label{rem:p1C2}
On each open half-line, $\theta(-t)$ is locally constant and $J\in
C^\infty$, so Eq.~\eqref{eq:p1eq} has smooth coefficients and
$p_1\in C^\infty$ there. At $t=0$, the jump $[\theta(-t)]=-1$ and
the jump $[J'']=2/c$ that follows from Eq.~\eqref{eq:Jeq} enter the right-hand side of Eq.~\eqref{eq:p1eq}
with opposite signs and cancel exactly giving $[p_1'']=0$.
Hence $p_1\in C^2(\mathbb{R})$.
\end{remark}

\begin{remark}[Strict positivity of $p_1$]
\label{rem:p1pos}
The function $\widehat{P}(y,h)$ is proportional to a probability,
hence $p_1\ge 0$.
Suppose $p_1(t^*)=0$ for some finite $t^*$. Since $p_1\ge 0$ and $p_1\in
C^2(\mathbb{R})$, it must be $p_1'(t^*)=0$ otherwise $p_1$ would change sign. Because $p_1\in
C^2(\mathbb{R})$, the initial data $(p_1(t^*),p_1'(t^*))=(0,0)$ are
well-defined at every $t^*\in\mathbb{R}$, including $t^*=0$, with no
need to treat the origin separately. On any open half-line adjacent
to $t^*$, Eq.~\eqref{eq:p1eq} is a linear ODE with smooth
coefficients, so the Picard--Lindelöf theorem gives the unique
solution $p_1\equiv 0$, contradicting $p_1(t) > 0$ as
$t\to\pm\infty$. Hence $p_1>0$ for all $t\in\mathbb{R}$.
\end{remark}

\begin{remark}
The split between $a(c)$ (from the eigenvalue problem) and $c$ (from
marginal stability) is conceptually important. In the limit
ODE~\eqref{eq:p1eq} alone, $a$ and $c$ appear as independent
parameters:\ for each $c\in(0,1)$ there is an eigenvalue $a(c)$. The
critical $c=\Cstar$ is then the value at which marginal stability is
satisfied, and $\Astar=a(\Cstar)$. The identity $\Astar+\Cstar/2=1/2$
is non-trivial because it states a relation that, while compatible with
both conditions, is not manifest in either.
\end{remark}

\section{The scaling regime}

\subsection{The function $f$}
\label{sec:fg-origin}

Following~\cite[Eq.~(190)]{CKPUZ-III}, define
\begin{equation}\label{eq:tildef}
\ft(y,h) := \gam(y)\,\fh(y,h)
= -\tfrac{1}{2}h^{2}\theta(-h) + \gam(y)\,\jh(y,h),
\end{equation}
where $\fh$ is the natural smooth
generalization of the fullRSB field for hard spheres. 
The function
$\ft$ is finite and smooth (on each open half-line), and its asymptotic
behavior is $\ft(y,h)\to-\tfrac12 h^{2}\theta(-h)$ as $|h|\to\infty$.
A direct computation of its second $h$-derivative gives, at leading
order in the matching region with the scaling ansatz in Eq.~\eqref{eq:scaling}:
\begin{equation}\label{eq:ftpp}
\ft''(y,h) \;=\; -\theta(-t) - \frac{c}{2}\,J''(t) + o(1)
\;\equiv\; -f(t).
\end{equation}
This identifies the natural auxiliary function:
\begin{equation}\label{eq:fg-physical}
f(t) := \theta(-t) + \tfrac{c}{2}J''(t),
\qquad
g(t) := 1 - f(t).
\end{equation}

\begin{lemma}\label{lem:decay}
As $t\to-\infty$: $f\to 1$, $g\to 0$, $f'\to 0$, super-exponentially.
As $t\to+\infty$: $f\to 0$, $g\to 1$, $f'\to 0$, super-exponentially.
\end{lemma}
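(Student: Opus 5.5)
Since $f=\theta+\tfrac{c}{2}J''$, $g=1-f$ and $f'=\tfrac{c}{2}J'''$ for $t\neq0$, the lemma reduces to showing that $J''\to0$ and $J'''\to0$ super-exponentially at both ends. The plan is to linearize Eq.~\eqref{eq:Jeq} around its boundary values, where it becomes an Ornstein--Uhlenbeck-type ODE with Gaussian-like decaying modes.

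\emph{Right half-line ($t>0$, $\theta=0$).} Write Eq.~\eqref{eq:Jeq} as $\tfrac{c}{2}J''+b\,tJ'-J=\tfrac{c^2}{4}J'^2$, and note that $J\to0$. The linear operator $L=\tfrac{c}{2}\partial_t^2+bt\partial_t-1$ has two independent solutions. One grows algebraically, like $t^{1/b}$, and is excluded by $J(+\infty)=0$. The other decays like $e^{-bt^2/c}$ times a power of $t$, which one can check by a WKB/Laplace-transform analysis or through the parabolic cylinder functions obtained after rescaling. To handle the quadratic term I would set up the standard fixed-point (variation of constants) argument on $[T,\infty)$ in a weighted space with norm $\sup e^{bt^2/(2c)}(|J|+|J'|)$. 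Because the nonlinearity is quadratic, the contraction closes for $T$ large, and the solution with $J\to0$ must equal the fixed point, giving $J,J',J''=O(t^{k}e^{-bt^2/c})$. Differentiating the equation then transfers the same bound to $J'''$. This is the super-exponential decay, and it gives $f=\tfrac c2J''\to0$, $g\to1$ and $f'\to0$.

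\emph{Left half-line ($t<0$, $\theta=1$).} Set $J=1+u$, so that $\tfrac{c}{2}u''-(c-b)tu'-u=\tfrac{c^2}{4}u'^2$ with $u\to0$. Since $c-b=(c-1)/2<0$, the drift coefficient is $(b-c)\,t\,u'$ with $b-c=(1-c)/2>0$. As $t\to-\infty$ the decaying mode behaves like $\exp(-(b-c)t^2/c)$, while the other mode is algebraic, $|t|^{1/(b-c)}$, and is excluded by $u\to0$. The same weighted contraction argument yields Gaussian decay of $u,u',u'',u'''$, and therefore $f=1+\tfrac{c}{2}u''\to1$, $g\to0$ and $f'\to0$ super-exponentially.

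\emph{Main obstacle.} The delicate step is excluding the algebraic mode. The boundary condition only says $J\to0$ (resp.\ $1$), whereas the growing algebraic solutions diverge, so it does exclude them; but one must also confirm that no slowly decaying algebraic solution exists. This is a sign question: a mode $|t|^{-\beta}$ would require a balance $b\beta\cdot(\dots)=-1$ with the wrong sign. I would settle it by comparing signs in the OU indicial equation, $\lambda t^{\lambda}$ terms: $b\lambda-1=0$ gives $\lambda=1/b>0$, so the only algebraic mode grows. Once the decaying mode is unique, the super-exponential rate follows.
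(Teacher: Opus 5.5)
Your proposal is correct and takes the same route as the paper. The paper's proof only asserts the Gaussian envelopes $J\sim e^{-bt^{2}/c}$ as $t\to+\infty$ and $1-J\sim e^{-(b-c)t^{2}/c}$ as $t\to-\infty$, and lets them control $J''$ and $J'''$, hence $f-\theta(-t)$ and $f'$. Your linearization, your identification of the excluded algebraic modes $t^{1/b}$ and $|t|^{1/(b-c)}$, and your weighted contraction argument simply supply the details the paper omits.

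The one step you should make explicit is why the given $J$ lies in the small ball where your fixed point is unique. For example, first show that $J\to 0$ forces $J'\to 0$, then treat $bt-\tfrac{c^{2}}{4}J'$ as a perturbed linear drift coefficient. This is the real content of excluding the growing mode, and the paper does not address it either.
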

\begin{proof}
$1-J(t\to -\infty) \sim \exp\!\bigl(-\frac{(b-c)}ct^{2}\bigr)$ on the left and $J(t\to\infty) \sim \exp\!\bigl(- \frac{b}{c} t^{2}\bigr)$ 
on the right have Gaussian envelopes, controlling all derivatives of $J$ in their
respective half-lines, hence $f-\theta(-t)$ and $f'$.
\end{proof}

\subsection{The ODE for $f$}
\label{sec:lemma3}

The function $f(t)$ inherits an ODE from Eq.~\eqref{eq:Jeq}.

\begin{lemma}\label{lem:f}
On each open half-line ($\theta=\theta(-t)$ locally constant, with
$\theta^{2}=\theta$), the function $f$ satisfies
\begin{equation}\label{eq:fODE}
\frac{c}{2}\,f''(t) \;=\; -c\,f(t)\,g(t)
\;+\; \frac{c^{2}}{2}\,J'(t)\,f'(t)
\;-\; (b-c\theta)\,t\,f'(t).
\end{equation}
\end{lemma}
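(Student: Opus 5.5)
The plan is a direct computation: on each open half-line, express $f$ algebraically through $J$ and $J'$ using Eq.~\eqref{eq:Jeq}, differentiate twice, and then eliminate $J''$ in favour of $f$. Fix an open half-line, so that $\theta=\theta(-t)$ is a constant in $\{0,1\}$ with $\theta^2=\theta$, and $J\in C^\infty$ there (see Remark~\ref{rem:p1C2}). Abbreviate $\beta:=b-c\theta$, which is also constant on the half-line.

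\textbf{Step 1: first integral form of $f$.} Eq.~\eqref{eq:Jeq} reads $\tfrac{c}{2}J''=-\beta tJ'+J-\theta+\tfrac{c^2}{4}J'^2$. Substituting this into the definition \eqref{eq:fg-physical} of $f$, the $\theta$'s cancel, giving
\[
f=-\beta tJ'+J+\tfrac{c^2}{4}J'^2 .
\]

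\textbf{Step 2: differentiate twice.} Differentiating once gives $f'=(1-\beta)J'-\beta tJ''+\tfrac{c^2}{2}J'J''$. Differentiating again gives
\[
f''=(1-2\beta)J''-\beta tJ'''+\tfrac{c^2}{2}\bigl(J''^2+J'J'''\bigr).
\]
Multiply by $c/2$ and use $f'=\tfrac{c}{2}J'''$, which holds on the half-line because $\theta$ is locally constant. This produces the transport terms $-\beta tf'+\tfrac{c^2}{2}J'f'$ of \eqref{eq:fODE}, together with the remainder
\[
\tfrac{c}{2}(1-2\beta)J''+\tfrac{c^3}{4}J''^2 .
\]

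\textbf{Step 3: the reaction term.} This is the only non-mechanical point. Insert $\tfrac{c}{2}J''=f-\theta$ into the remainder. Use $b=(1+c)/2$, so that $1-2\beta=-c+2c\theta=c(2\theta-1)$. The remainder then becomes
\[
c\bigl[(2\theta-1)(f-\theta)+(f-\theta)^2\bigr]=c(f-\theta)(f+\theta-1)=c\,(f^2-f+\theta-\theta^2).
\]
By $\theta^2=\theta$ this equals $-cf(1-f)=-cfg$. Combining this with the transport terms from Step 2 yields exactly \eqref{eq:fODE}.

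The main obstacle is only bookkeeping. One must use $b=(1+c)/2$ precisely at the point where the coefficient $1-2\beta$ appears; without it, the reaction term would not factor as $f(1-f)$. One must also remember that the identity holds only on each open half-line, since at $t=0$ the function $J''$ jumps and $J'''$ need not exist.
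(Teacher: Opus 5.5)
Your proof is correct and follows the paper's argument essentially step for step. You rewrite Eq.~\eqref{eq:Jeq} as $f=-(b-c\theta)tJ'+J+\tfrac{c^2}{4}J'^2$, differentiate twice on a half-line, substitute $f'=\tfrac{c}{2}J'''$ and $\tfrac{c}{2}J''=f-\theta$, and factor the remainder using $b=(1+c)/2$ and $\theta^2=\theta$, as the paper does. You are slightly more explicit than the paper in pointing out that $b=(1+c)/2$ is what turns the coefficient $1-2(b-c\theta)$ into $c(2\theta-1)$, a step the paper uses without comment.
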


\begin{proof}
On each half-line, $\theta$ is locally constant, so $f' = (c/2)\,J'''$.
Write Eq.~\eqref{eq:Jeq} as
\begin{equation}
  f(t) \;=\;
  \Big[\!-\!b + c\,\theta\Big]\,t\,J'
  \;+\; J 
  \;+\; \frac{c^{2}}{4}\,(J')^{2},
\end{equation}
and differentiate twice with respect to $t$, treating $\theta$ as constant, hence
\begin{equation}
  f'(t) \;=\;
  \Big[\!-\!b + c\,\theta\Big]\,(J' + t J'')
  \;+\; J' 
  \;+\; \frac{c^{2}}{2}\,J' J'',
\end{equation}
and using $\theta^2 = \theta$
\begin{equation}\begin{split}
\frac{c}2  f''(t) \;&=\;
  \frac{c}2\Big[\!-\!b + c\,\theta\Big]\,(2J'' + t J''')
  \;+\; \frac{c}2J'' 
  \;+\; \frac{c^{3}}{4}\,[ (J'')^2 + J' J'''] \\
  &=\;
  (-c + 2c\,\theta)(f-\theta) 
  \;+\; c (f-\theta)^2  
  \;+\; \frac{c^{2}}{2}J' f' 
 -   ( b - c\,\theta ) t f' \\
  &=\; -c f g 
  \;+\; \frac{c^{2}}{2}J' f' 
 -   ( b - c\,\theta ) t f' ,
\end{split}\end{equation}
which completes the proof.
\end{proof}

\begin{lemma}\label{lem:smooth}
The function $f(t)$ belongs to
$C^{2}(\Real)$.
\end{lemma}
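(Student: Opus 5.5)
The plan is to establish $C^0$, $C^1$ and $C^2$ regularity of $f$ at the single problematic point $t=0$ in turn. On each open half-line $f=\theta+\tfrac c2 J''$ with $\theta$ constant and $J\in C^\infty$, so $f$ is already smooth there. I take as the regularity class of $J$ the one already implicit in Remark~\ref{rem:p1C2}: $J\in C^1(\Real)$, and $J$ is smooth on each closed half-line with one-sided derivatives of all orders at $0$. This is what a second-order ODE with piecewise-constant coefficients produces. The jump of $J''$ is then read off Eq.~\eqref{eq:Jeq}, whose right-hand side jumps only through the $\theta$ terms, the $t\theta J'$ term vanishing at $t=0$. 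Throughout I write $[X]=X(0^+)-X(0^-)$, so $[\theta]=-1$ and $[J'']=2/c$. Each step will use the elementary fact that a function that is continuous at $0$, differentiable on $\Real\setminus\{0\}$, and whose derivative has equal one-sided limits at $0$ is differentiable at $0$ with continuous derivative there.

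\emph{Continuity.} We have $[f]=[\theta]+\tfrac c2[J'']=-1+1=0$. A cleaner route, which also feeds the next step, is the representation used in the proof of Lemma~\ref{lem:f}:
\[
f=\bigl(-b+c\theta\bigr)tJ'+J+\tfrac{c^2}{4}(J')^2 .
\]
Here $J,J'$ are continuous and the only discontinuous coefficient multiplies $t$, so $f$ is continuous at $0$ with $f(0)=J(0)+\tfrac{c^2}{4}J'(0)^2$.

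\emph{First derivative.} Off the origin,
\[
f'=\bigl(-b+c\theta\bigr)(J'+tJ'')+J'+\tfrac{c^2}{2}J'J'' .
\]
The term $tJ''$ has vanishing one-sided limits at $0$, so
\[
[f']=c[\theta]J'(0)+\tfrac{c^2}{2}J'(0)[J'']=-cJ'(0)+cJ'(0)=0 .
\]
The Heaviside jump and the $J''$ jump cancel exactly, as in Remark~\ref{rem:p1C2}. By the elementary fact above, $f\in C^1(\Real)$.

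\emph{Second derivative.} Here I use Lemma~\ref{lem:f} rather than differentiating again. On each half-line,
\[
\tfrac c2 f''=-cfg+\tfrac{c^2}{2}J'f'-(b-c\theta)tf' .
\]
By the previous steps $f$, $g=1-f$, $J'$ and $f'$ are continuous at $0$, and the only discontinuous coefficient multiplies $t f'$, which tends to $0$. Hence $f''(0^\pm)$ both exist and equal $-2f(0)g(0)+cJ'(0)f'(0)$. Applying the elementary fact to $f'$ gives $f\in C^2(\Real)$.

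The main obstacle is not the algebra, which is just two cancellations, but justifying the one-sided regularity of $J$ at $0$: that $J\in C^1$ and that $J'',J'''$ have finite one-sided limits. If this is not granted from the matching framework, I would argue as follows. On each closed half-line, Eq.~\eqref{eq:Jeq} is a smooth ODE $J''=\Phi_\pm(t,J,J')$, so the solution on $(-\infty,0]$ extends smoothly up to $0$, and similarly on $[0,\infty)$. Since $J$ solves the equation across $0$ in the weak (integrated) sense with a bounded right-hand side, $J'$ is continuous, which is all that is needed above.
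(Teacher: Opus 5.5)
Your proof is correct and follows the paper's argument step for step. The jump $[J'']=2/c$ cancels $[\theta]=-1$ to give continuity. Differentiating Eq.~\eqref{eq:Jeq} on each half-line (equivalently, the representation of $f$ used in Lemma~\ref{lem:f}) gives the cancellation $-cJ'(0)+cJ'(0)=0$ for $f'$. Lemma~\ref{lem:f}, whose only $\theta$-dependent term carries a factor $t$, then gives matching one-sided limits of $f''$. Your explicit treatment of the one-sided regularity of $J$ at $0$, which the paper uses tacitly, is a welcome but inessential addition.
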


\begin{proof}
On each open half-line $\theta(-t)$ is locally constant and
$J\in C^{\infty}$, so $f\in C^{\infty}$ there.
It remains to check the continuity of $f$, $f'$, and $f''$ at $t=0$.

\textit{Continuity of $f$.}
Evaluating Eq.~\eqref{eq:Jeq} at $0^{\pm}$ gives
$[J'']= J''(0^{+})-J''(0^{-})=2/c$, so
$[\tfrac{c}{2}J'']=+1$, which exactly cancels the jump
$[\theta(-t)]=-1$. Hence $f$ is continuous at $0$.

\textit{Continuity of $f'$.}
On each half-line $f'=\tfrac{c}{2}J'''$.
Differentiating Eq.~\eqref{eq:Jeq} on $(0,+\infty)$ ($\theta=0$)
and evaluating at $0^{+}$, and on $(-\infty,0)$ ($\theta=1$) at
$0^{-}$, gives
\begin{align*}
\tfrac{c}{2}J'''(0^{+}) &= (1-b)\,J'(0)
  +\tfrac{c^{2}}{2}\,J'(0)\,J''(0^{+}),\\
\tfrac{c}{2}J'''(0^{-}) &= (1-b+c)\,J'(0)
  +\tfrac{c^{2}}{2}\,J'(0)\,J''(0^{-}).
\end{align*}
Subtracting and using $[J'']=2/c$:
\[
\tfrac{c}{2}\bigl[J'''(0^{+})-J'''(0^{-})\bigr]
\;=\; -c\,J'(0)\;+\;\tfrac{c^{2}}{2}\,J'(0)\cdot\tfrac{2}{c}
\;=\;0.
\]
Hence $f'(0^{+})=f'(0^{-})$.
Since $f$ is continuous and $f'$ has a common one-sided limit $L$
at $0$, the mean-value theorem gives
$(f(t)-f(0))/t = f'(\xi_{t})\to L$ as $t\to 0$,
so $f'(0)=L$ and $f\in C^{1}(\Real)$.

\textit{Continuity of $f''$.}
In Eq.~\eqref{eq:fODE} the only $\theta$-dependent term is
$-(b-c\theta(-t))\,t\,f'(t)$, which vanishes at $t=0$
regardless of the value of $\theta$.
The remaining terms are continuous, so
\[
\tfrac{c}{2}f''(0^{\pm})
\;=\;-c\,f(0)\,g(0)\;+\;\tfrac{c^{2}}{2}\,J'(0)\,f'(0),
\]
the same from both sides.
Hence $f''(0^{+})=f''(0^{-})$, and since $f\in C^{1}$
near $0$, the limit
$\lim_{t\to 0} (f'(t)-f'(0))/t$
exists and equals this common value, giving $f\in C^{2}(\Real)$.
\end{proof}

\subsection{Marginal stability differentiated}
\label{sec:fg-test}

Ref.~\cite[Sec.~IX E]{CKPUZ-III} shows that
the requirement $\dot\kappa(y)=-\dot\gam/(y\gam^{2})$, which follows
from differentiating Eq.~\eqref{116c}, implies the integral identity
\begin{equation}\label{eq:I193}
1 \;=\; \frac{\hat\varphi}{2}\int\dd h\,e^{h}\,\Pb(y,h)\,\ft''(y,h)^{2}
\qquad\text{for all }y,
\end{equation}
which is~\cite[Eq.~(193)]{CKPUZ-III}. Physically, it corresponds to the vanishing of the replicon,
$\lambda_{R}=0$~\cite[Eq.~(225)]{CKPUZ-III}:\ the fullRSB phase is
marginally stable along the entire profile $\gam(y)$.

Differentiating Eq.~\eqref{eq:I193} once more in $y$, using Eqs.~\eqref{116a}, \eqref{116b} 
and the chain rule, yields a second identity:
\begin{equation}\label{eq:I195}
y = \frac{\gam(y)}{2}\,
\frac{\displaystyle\int\dd h\,e^{h}\,\Pb(y,h)\,\ft'''(y,h)^{2}}
     {\displaystyle\int\dd h\,e^{h}\,\Pb(y,h)\,\bigl[\ft''(y,h)^{2}+\ft''(y,h)^{3}\bigr]} ,
\end{equation}
which is~\cite[Eq.~(195)]{CKPUZ-III}.  It is again a structural identity holding
along the entire profile, encoding the fact that the marginal
stability is preserved under variations of $y$.

The denominator of Eq.~\eqref{eq:I195} contains the combination $\ft''^{2}+
\ft''^{3}=\ft''^{2}(1+\ft'')$. Using
Eq.~\eqref{eq:fg-physical} this is exactly
\begin{equation}\label{eq:f2g-magic}
\ft''^{2}+\ft''^{3} \;=\; (-f)^{2}(1+(-f)) \;=\; f^{2}g.
\end{equation}
The numerator scales similarly:\ $\ft'''(y,h)=(y^{b}/\sqrt{\gam_{\infty}})\,(-f'(t))$ in the matching region, giving
$\ft'''^{2} = y^{2b}(f')^{2}/\gamma_\infty$. Substituting both into Eq.~\eqref{eq:I195} and
using $\gamma(y) \sim \gamma_\infty y^{-c}$ and $b=(1+c)/2$, the $y$-dependences cancel and one
obtains~\cite[Eq.~(199)]{CKPUZ-III}:
\begin{equation}\label{eq:NoverD}
\frac{1}{2}\;=\;\frac{N}{D},
\qquad\text{where }\;
N:=\!\int\! p_{1}(t)\,f(t)^{2}g(t)\,\dd t,
\quad
D:=\!\int\! p_{1}(t)\,f'(t)^{2}\,\dd t.
\end{equation}
Note that, since $p_{1}$ grows only as a power, these integrals converge thanks to Lemma~\ref{lem:decay}, which implies that 
their integrands vanish super-exponentially at \emph{both} ends.

\begin{remark}
The CKPUZ proof~\cite[Sec.~IX E]{CKPUZ-III} establishes Eq.~\eqref{eq:NoverD}
as a closure for the exponent~$c$:\ together with the eigenvalue
condition $a=a(c)$ for Eq.~\eqref{eq:p1eq}, it pins down $\Cstar$ as the
solution of a one-dimensional implicit equation. The value of $\Astar$
is then $a(\Cstar)$. Eq.~\eqref{eq:apc} (to be proven in the following) shows that
$\Astar=(1-\Cstar)/2$, but does not directly determine $\Cstar$:\ the
latter is intrinsically defined by Eq.~\eqref{eq:NoverD}.
\end{remark}

\section{The algebraic identity}
\label{sec:proof}

\begin{lemma}\label{lem:weak}
For any test function $\xi$ such that $\xi$ and $\xi'$ vanish super-exponentially at $\pm\infty$,
\begin{equation}\label{eq:weakform}
  \int_{\Real}\! p_{1}\!\left[
    \frac{c}{2}\,\xi'' - \bigl(a-c\,\theta(-t)\bigr)\,\xi
    + \bigl(b-c\,\theta(-t)\bigr)\bigl(\xi + t\,\xi'\bigr)
    - \frac{c^{2}}{2}\,J'\,\xi'
  \right]\dd t \;=\; 0.
\end{equation}
\end{lemma}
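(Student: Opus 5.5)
The plan is to multiply Eq.~\eqref{eq:p1eq} by $\xi$, integrate over $\Real$, and move every derivative from $p_1$ onto $\xi$ by integration by parts. Two points need care: the Heaviside coefficients jump at $t=0$, and the integrals live on an infinite domain.

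I will split the integral at $t=0$ and integrate by parts separately on $(-\infty,0)$ and $(0,\infty)$. On each open half-line $\theta$ is constant and all functions are smooth, so the manipulations are classical.

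The terms are handled as follows.

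\emph{Diffusion term.} Integrating $\frac{c}{2}p_1''\xi$ by parts twice gives $\frac{c}{2}p_1\xi''$.

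\emph{Drift term.} For $(b-c\theta)t\,p_1'\xi$, one integration by parts gives $-(b-c\theta)p_1(\xi+t\xi')$, since $(t\xi)'=\xi+t\xi'$.

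\emph{$J$-dependent terms.} These combine into a total derivative:
\[
-\tfrac{c^2}{2}\bigl(p_1'J'+p_1J''\bigr)\xi=-\tfrac{c^2}{2}(p_1J')'\xi .
\]
One integration by parts turns this into $+\tfrac{c^2}{2}p_1J'\xi'$.

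\emph{Mass term.} The term $(a-c\theta)p_1\xi$ stays as it is.

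Moving everything to one side and flipping the overall sign gives exactly the integrand of Eq.~\eqref{eq:weakform}:
\[
\frac{c}{2}\xi''-(a-c\theta)\xi+(b-c\theta)(\xi+t\xi')-\frac{c^2}{2}J'\xi' .
\]

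The main work is to show that all boundary terms vanish.

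\emph{Boundary terms at $t=0$.} The contributions from the two half-lines are
\[
\tfrac{c}{2}\bigl[p_1'\xi-p_1\xi'\bigr],\qquad (b-c\theta)\,t\,p_1\xi,\qquad \tfrac{c^2}{2}p_1J'\xi ,
\]
each evaluated as a jump at $0$. By Remark~\ref{rem:p1C2}, $p_1\in C^2$. $J'$ is continuous at $0$, since $J''$ has only a finite jump. The test function $\xi$ is assumed $C^1$. Hence the first and third brackets take equal values from both sides and cancel. The middle one vanishes because of the explicit factor $t$, whatever the jump of $\theta$. Continuity of $\xi'$ matters here. In the intended application $\xi=f(1-f)$, this is guaranteed by Lemma~\ref{lem:smooth}.

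\emph{Boundary terms at $\pm\infty$.} The boundary terms $p_1\xi'$, $p_1'\xi$, $t\,p_1\xi$ and $p_1J'\xi$ must vanish. $p_1$ grows at most like a power, namely $(-t)^{\theta}$ on the left and $t^{-\alpha}$ on the right. I expect $p_1'$ to be polynomially bounded as well, from the same asymptotic analysis of the linear ODE. $J'$ is bounded, by the Gaussian tails in Lemma~\ref{lem:decay}. Super-exponential decay of $\xi$ and $\xi'$ then kills every boundary term and makes every integral absolutely convergent.

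The main obstacle is the polynomial control of $p_1'$. It does not follow from the stated power laws of $p_1$ alone. I would first try to get it from the ODE itself: the drift coefficient $(b-c\theta)t$ dominates at large $|t|$, so a variation-of-constants argument bounds $p_1'$ by a power of $|t|$.

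If that fails, there is a cleaner fallback that avoids $p_1'$ altogether. For the term $\frac{c}{2}p_1''\xi$, integrate by parts once on finite intervals $[-L,L]$. The remaining boundary term $p_1'\xi$ can then be rewritten as $(p_1\xi)'-p_1\xi'$. Its contribution is controlled by $p_1\xi$ and $p_1\xi'$ evaluated at $\pm L$, up to integrals that converge absolutely. Letting $L\to\infty$ completes the argument.
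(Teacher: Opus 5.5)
Your proposal is correct and follows the paper's proof. You multiply Eq.~\eqref{eq:p1eq} by $\xi$, integrate the diffusion term by parts twice and the drift term once, and fold the $J'$, $J''$ terms into $(p_1J')'$; splitting at $t=0$ and using continuity of $p_1,p_1',J',\xi,\xi'$ is an explicit version of the paper's remark that $\theta$ appears only undifferentiated or multiplied by $t$, and your attention to the boundary term $p_1'\xi$ at $\pm\infty$ fills a point the paper passes over by citing only the power-law growth of $p_1$. To make the variation-of-constants bound work, integrate from infinity: the homogeneous mode for $p_1'$ grows like a Gaussian, and the polynomial growth of $p_1$ is what excludes it.
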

\begin{proof}
Multiply Eq.~\eqref{eq:p1eq} by $\xi$ and integrate. 
Note that when integrating by parts in the following, each occurrence of $\theta(-t)$ in
Eq.~\eqref{eq:p1eq} is either undifferentiated or multiplied by $t$, so distributional
contributions involving $\theta'(-t)=-\delta(t)$ vanish (using $t\,\delta(t)=0$).
\begin{itemize}
\item $\int p_{1}''\,\xi\,\dd t = \int p_{1}\,\xi''\,\dd t$ by two integrations by parts;
  boundary terms at $\pm\infty$ vanish since $p_{1}$ grows only as a power while
  $\xi,\xi'$ decay super-exponentially.
\item $\int (b-c\theta(-t))\,t\,p_{1}'\,\xi\,\dd t = -\!\int p_{1}\,\bigl[(b-c\theta(-t))(\xi+t\xi')\bigr]\dd t$
  by one integration by parts.
\item Combining the $J'$ and $J''$ terms,
  $\int p_{1}\,J''\,\xi\,\dd t + \int p_{1}'\,J'\,\xi\,\dd t = \int (p_{1}J')'\,\xi\,\dd t
   = -\!\int p_{1}\,J'\,\xi'\,\dd t$.
\end{itemize}
Collecting all terms gives Eq.~\eqref{eq:weakform}.
\end{proof}

\begin{theorem}\label{thm:algid}
Let $c\in(0,1)$, let $J$ solve Eq.~\eqref{eq:Jeq}, and let
$p_{1}$ be any solution of the eigenvalue Eq.~\eqref{eq:p1eq}
with the asymptotic decay conditions $p_{1}(t)\sim |t|^{\theta}$ at
$-\infty$ and $p_{1}(t)\sim t^{-\alpha}$ at $+\infty$. Then
\begin{equation}\label{eq:algid-thm}
cD \;=\; 2cN \;+\; (b-a-c)\,K,
\end{equation}
where $K=\int p_{1}fg\,\dd t$, $N=\int p_{1}f^{2}g\,\dd t$,
$D=\int p_{1}(f')^{2}\,\dd t$, with $f=\theta(-t)+(c/2)J''$ and
$g=1-f$.
\end{theorem}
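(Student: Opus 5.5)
The plan is to apply Lemma~\ref{lem:weak} with the test function $\xi:=fg=f(1-f)$. Then I will use Lemma~\ref{lem:f} to remove $f''$, and watch the drift, $J'$ and $\theta$ terms cancel pointwise on each open half-line.

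\textbf{Admissibility of $\xi$.} By Lemma~\ref{lem:smooth}, $f\in C^2(\Real)$, so $\xi\in C^2(\Real)$. Then $\xi''$ is continuous, and the two integrations by parts in Lemma~\ref{lem:weak} produce no distributional contribution at $t=0$. By Lemma~\ref{lem:decay}, $fg$ and $f'$ vanish super-exponentially at $\pm\infty$, since one of $f$, $g$ tends to zero and the other is bounded. Hence $\xi$ and $\xi'=f'(1-2f)$ also vanish super-exponentially. The integrals $K$, $N$, $D$ converge because $p_1$ grows at most as a power.

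\textbf{Computing the integrand.} On each open half-line I differentiate twice:
\[
\xi'=(1-2f)f',\qquad \frac{c}{2}\xi''=(1-2f)\,\frac{c}{2}f''-c\,(f')^2 .
\]
Then I substitute Eq.~\eqref{eq:fODE} for $\frac{c}{2}f''$:
\[
\frac{c}{2}\xi''=-c\,fg(1-2f)+(1-2f)\Bigl[\tfrac{c^2}{2}J'f'-(b-c\theta)\,t f'\Bigr]-c\,(f')^2 .
\]
The remaining terms of the integrand in Eq.~\eqref{eq:weakform} behave as follows.
\begin{itemize}
\item The term $(b-c\theta)\,t\,\xi'=(b-c\theta)(1-2f)\,t f'$ cancels the drift term above exactly.
\item The term $-\frac{c^2}{2}J'\xi'=-\frac{c^2}{2}(1-2f)J'f'$ cancels the $J'$ term exactly.
\item The non-derivative terms combine as $-(a-c\theta)\xi+(b-c\theta)\xi=(b-a)\xi$, so the explicit Heaviside drops out.
\end{itemize}
These two cancellations and the disappearance of $\theta$ are the key point. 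They hold on both half-lines, hence almost everywhere, which suffices inside the integral.

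\textbf{Collecting terms.} Using $fg(1-2f)=fg-2f^2g$, the integrand of Eq.~\eqref{eq:weakform} reduces to
\[
p_1\bigl[(b-a-c)\,fg+2c\,f^2g-c\,(f')^2\bigr].
\]
Integrating this gives $0=(b-a-c)K+2cN-cD$, which is Eq.~\eqref{eq:algid-thm}. The eigenvalue condition on $a$ is never used, only the ODE and the decay of $p_1$, so the identity holds for any $a$.

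The main obstacle I expect is not the algebra but justifying Lemma~\ref{lem:weak} for this particular $\xi$. One must check that $\xi''$ has no delta contribution at $t=0$, which is exactly where $f\in C^2$ from Lemma~\ref{lem:smooth} is needed. One must also check that the boundary terms $p_1\xi'$, $p_1'\xi$ and $t\,p_1\xi$ vanish at infinity. For the latter I would use the power-law asymptotics of $p_1$ together with the Gaussian envelopes of Lemma~\ref{lem:decay}. I would also differentiate the asymptotic form to control $p_1'$, or alternatively move all derivatives off $p_1$ before taking limits.
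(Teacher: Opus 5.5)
Your proposal is correct and follows the paper's proof essentially line by line. It uses the same test function $\xi=fg$ in Lemma~\ref{lem:weak}, the same substitution of Eq.~\eqref{eq:fODE} for $\frac{c}{2}f''$, the same two cancellations of the $J'f'$ and $t f'$ terms, and arrives at the same remainder $(b-a-c)fg+2cf^{2}g-c(f')^{2}$. Your added admissibility checks are details the paper passes over quickly, and you handle them appropriately. These are the continuity of $\xi'$ at $t=0$, for which $f\in C^{1}$ would already suffice, and the control of the boundary term $p_1'\xi$ at infinity.
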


\begin{proof}
Choose as a test function in Lemma~\ref{lem:weak} the function $\xi(t):=f(t)g(t)$,
which satisfies the prescribed decay because $f$,
$g$, $f'$ all decay super-exponentially in $|t|$ due to Lemma~\ref{lem:decay}.
With $\xi=fg=f(1-f)$, $\xi'=(1-2f)f'$, $\xi''=(1-2f)f''-2(f')^{2}$,
substituting $(c/2)f''$ from Lemma~\ref{lem:f}, one has
\[
  \frac{c}{2}\,\xi'' \;=\; (1-2f)\Bigl[-c\,f g \;+\; \tfrac{c^{2}}{2}\,J' f' \;-\; \bigl(b-c\theta\bigr)\,t\,f'\Bigr] \;-\; c\,(f')^{2}.
\]
Two cancellations occur:\
the $J'f'$ piece inside square brackets gives
$(c^{2}/2)(1-2f)J'f'$, which cancels the explicit
$-(c^{2}/2)J'(1-2f)f'$ in Eq.~\eqref{eq:weakform};\
the $tf'$ piece inside square brackets gives
$-(b-c\theta)t(1-2f)f'$, which cancels the explicit
$+(b-c\theta)t(1-2f)f'$ in Eq.~\eqref{eq:weakform}. 
What remains, using $\theta(1-\theta)=0$ on each half-line and
$\theta^{2}=\theta$, simplifies to
\[
0 \;=\; \int p_{1}\,\Bigl[\,(b-a-c)\,fg \;+\; 2c\,f^{2}g \;-\; c\,(f')^{2}\,\Bigr]\dd t,
\]
which is Eq.~\eqref{eq:algid-thm}. The integrals converge because of Lemma~\ref{lem:decay}.
\end{proof}

\section{Existence of the $f\ge 0$ branch via Fisher--KPP}
\label{sec:fkpp}

To complete the proof, it remains to show that $K=\int p_{1}fg\,\dd
t>0$. We split this into:\ (i) the upper bound $f\le 1$, by interior
maximum principle on the limit ODE; (ii) the lower bound $f\ge 0$, by
a Fisher--KPP argument on the fullRSB flow.

\subsection{The upper bound $f\le 1$ from the limit ODE}

\begin{lemma}\label{lem:fu1}
Any solution of Eq.~\eqref{eq:fODE} with $f(-\infty)=1$ and $f(+\infty)=0$
satisfies $f(t)\le 1$ for all $t\in\Real$.
\end{lemma}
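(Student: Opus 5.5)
We argue by contradiction with an interior maximum principle applied to Eq.~\eqref{eq:fODE}. By Lemma~\ref{lem:smooth}, $f\in C^{2}(\Real)$, so there are no one-sided corner issues at $t=0$. Suppose $\sup_{\Real} f>1$. Since $f(-\infty)=1$ and $f(+\infty)=0$ (Lemma~\ref{lem:decay}), the supremum exceeds both limits. Hence it is attained at some finite point $t_{0}$, with $f(t_{0})>1$, $f'(t_{0})=0$ and $f''(t_{0})\le 0$.

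Now evaluate Eq.~\eqref{eq:fODE} at $t_{0}$. If $t_{0}\neq 0$, we are on an open half-line where the lemma applies. If $t_{0}=0$, we use the common one-sided value of $f''(0)$ computed in the proof of Lemma~\ref{lem:smooth}. In either case both drift terms $\tfrac{c^{2}}{2}J'f'$ and $-(b-c\theta)tf'$ vanish because $f'(t_{0})=0$. This leaves
\[
\tfrac{c}{2}\,f''(t_{0}) \;=\; -c\,f(t_{0})\bigl(1-f(t_{0})\bigr) \;=\; c\,f(t_{0})\bigl(f(t_{0})-1\bigr)\;>\;0,
\]
since $c>0$ and $f(t_{0})>1>0$. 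This contradicts $f''(t_{0})\le 0$, so $f\le 1$ everywhere.

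The only delicate point is justifying that the supremum is actually attained. That needs continuity of $f$ on all of $\Real$, which Lemma~\ref{lem:smooth} provides, together with the limits at $\pm\infty$. The limit $f\to 1$ at $-\infty$ means the supremum cannot escape there once it exceeds $1$: pick $R$ with $f<\sup f$ outside $[-R,R]$ and use compactness of $[-R,R]$.

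A degenerate case would be $\sup f=1$ approached only asymptotically, but that already gives $f\le 1$. No strict-inequality argument is needed for the statement as posed.

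The same argument does not give a lower bound. At an interior minimum with $f(t_{0})<0$, the reaction term $-cfg$ has the wrong sign to exclude it. This is presumably why the paper turns to the Fisher--KPP flow for $f\ge 0$.
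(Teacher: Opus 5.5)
Your proposal is correct and follows the paper's proof essentially step for step: attain the maximum at a finite point using $f\in C^2$ and the limits at $\pm\infty$, kill both drift terms with $f'(t_0)=0$, and contradict $f''(t_0)\le 0$ via the reaction term $-cf(1-f)>0$. Your explicit handling of the case $t_0=0$, using the common one-sided value of $f''(0)$ from Lemma~\ref{lem:smooth}, is a small but welcome extra bit of care that the paper leaves implicit.
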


\begin{proof}
Because of Lemma~\ref{lem:smooth}
the function $f(t)$ belongs to
$C^{2}(\Real)$.
Suppose $\max_{t\in\Real} f(t)>1$. Since $f$ tends to $1$ and $0$ at
$\mp\infty$ respectively, this maximum is achieved at a finite point
$t_{\max}\in\Real$. There $f'(t_{\max})=0$ and $f''(t_{\max})\le 0$;
the $J'f'$ and $tf'$ terms in Eq.~\eqref{eq:fODE} both vanish, leaving
$(c/2)f''(t_{\max})=-cf(t_{\max})(1-f(t_{\max}))$. With $f(t_{\max})>1$,
the factor $1-f(t_{\max})<0$, so the right-hand side is strictly
positive, contradicting $f''(t_{\max})\le 0$.
\end{proof}

\subsection{Failure of the analogous lower-bound argument}

A symmetric application of the maximum-principle argument to the
infimum of $f$ does \emph{not} yield $f\ge 0$. At a candidate minimum
$t_{\min}$ with $f(t_{\min})<0$, we have $f''(t_{\min})\ge 0$ and the
ODE gives $(c/2)f''(t_{\min})=-cf(t_{\min})(1-f(t_{\min}))>0$ (since
both $-f(t_{\min})>0$ and $1-f(t_{\min})>0$). This is consistent
with $f''(t_{\min})\ge 0$ and gives no contradiction.

In fact, the non-linear ODE in Eq.~\eqref{eq:Jeq} admits a discrete family of
solutions $\{J_{n}\}_{n=0,1,2,\ldots}$, classified by the nodal count
of $f_{n}=\theta(-t)+(c/2)J_{n}''$:\ for $n\ge 1$, $f_{n}$ takes
strictly negative values on some interval. The selection of the
physically relevant branch ($n=0$, $f\ge 0$) must therefore be made
on independent grounds, and the natural place to do so is at the
level of the underlying fullRSB flow, where the probabilistic
interpretation of $f$ makes the bound $0\le f\le 1$ structural.

\subsection{The Fisher--KPP equation for the two-variable extension}

Following the construction of Sec.~\ref{sec:fg-origin}, define the
two-variable extension of $f$:
\begin{equation}\label{eq:Fdef}
F(y,h) := -\,\ft''(y,h) \;=\; \theta(-h) - \gam(y)\,\jh''(y,h).
\end{equation}
In the scaling regime, $F(y,h)\to f(t)$ with
$t=hy^{b}/\sqrt{\gam_{\infty}}$ fixed.

\begin{proposition}\label{prop:FKPP}
On each open half-line ($\theta(-h)$ locally constant, $\theta^{2}=
\theta$), the function $F$ defined by Eq.~\eqref{eq:Fdef} satisfies
\begin{equation}\label{eq:FKPP}
\frac{\partial F}{\partial y}
\;=\; \frac{\dot\gam(y)}{2y}\,F''
\;+\; \dot\gam(y)\!\left(\jh' - \frac{h\,\theta(-h)}{\gam(y)}\right)\!F'
\;+\; \frac{\dot\gam(y)}{\gam(y)}\,F(1-F).
\end{equation}
\end{proposition}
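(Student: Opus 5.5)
The plan is to prove the statement by direct differentiation, in the same spirit as Lemma~\ref{lem:f}: restrict to one open half-line in $h$, where $\theta\equiv\theta(-h)$ is a constant with $\theta^2=\theta$, and compute $\partial_y F$ from Eq.~\eqref{116a}. Since $F=\theta-\gam\,\jh''$ and $\theta$ does not depend on $y$,
\[
\partial_y F \;=\; -\dot\gam\,\jh'' \;-\; \gam\,\partial_y\jh'' .
\]
The first term is already explicit. For the second I will differentiate Eq.~\eqref{116a} twice in $h$, which is legitimate on a half-line because there $\theta$ is constant and $\jh$ is smooth.

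\textbf{Step 1: differentiate \eqref{116a} twice in $h$.} The bracket in \eqref{116a} is
\[
-\theta/\gam \;+\; \jh'' \;-\; 2y\,h\theta\,\jh'/\gam \;+\; y(\jh')^2 .
\]
Its second $h$-derivative is
\[
\jh'''' \;-\; \frac{2y\theta}{\gam}\bigl(2\jh''+h\,\jh'''\bigr) \;+\; 2y\bigl((\jh'')^2+\jh'\jh'''\bigr).
\]
Multiplying by $-\gam\,\dot\gam/(2y)$ gives
\[
-\gam\,\partial_y\jh'' \;=\; -\frac{\dot\gam\,\gam}{2y}\,\jh''''
\;+\; \dot\gam\,\theta\bigl(2\jh''+h\jh'''\bigr)
\;-\; \dot\gam\,\gam\bigl((\jh'')^2+\jh'\jh'''\bigr).
\]

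\textbf{Step 2: rewrite the derivative terms through $F$.} On the half-line,
\[
F'=-\gam\,\jh''' ,\qquad F''=-\gam\,\jh''''.
\]
With these, the three terms containing third or fourth derivatives become:
\begin{itemize}
\item $-\tfrac{\dot\gam\gam}{2y}\jh''''=\tfrac{\dot\gam}{2y}F''$, the diffusion term;
\item $\dot\gam\,\theta h\,\jh'''=-\dot\gam\,\tfrac{h\theta}{\gam}F'$;
\item $-\dot\gam\,\gam\,\jh'\jh'''=\dot\gam\,\jh' F'$.
\end{itemize}
The last two together give exactly the drift $\dot\gam\bigl(\jh'-h\theta/\gam\bigr)F'$ in Eq.~\eqref{eq:FKPP}.

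\textbf{Step 3: reduce the reaction term.} What remains is
\[
\dot\gam\Bigl[-\jh''+2\theta\,\jh''-\gam(\jh'')^2\Bigr].
\]
Substituting $\gam\,\jh''=\theta-F$, this equals
\[
\frac{\dot\gam}{\gam}\Bigl[-(\theta-F)+2\theta(\theta-F)-(\theta-F)^2\Bigr].
\]
Using $\theta^2=\theta$, the $\theta$-only terms cancel ($-\theta+2\theta-\theta=0$) and the cross terms cancel ($-2\theta F+2\theta F=0$). This leaves $\tfrac{\dot\gam}{\gam}(F-F^2)=\tfrac{\dot\gam}{\gam}F(1-F)$, which is the Fisher--KPP reaction term.

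\textbf{Expected obstacle.} The only real care needed is in Step 3, which mirrors the $(-c+2c\theta)(f-\theta)+c(f-\theta)^2$ simplification in the proof of Lemma~\ref{lem:f}. The Boolean identity $\theta^2=\theta$ must be applied only on open half-lines, where the differentiation is valid. I will not attempt any statement at $h=0$, since the proposition only claims the equation on each open half-line. I will also keep track of the sign convention ($F=-\ft''$) and of the fact that $\gam$ depends on $y$ but not on $h$.
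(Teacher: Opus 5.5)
Your proposal is correct and follows essentially the same route as the paper's proof. You differentiate $F=\theta-\gam\,\jh''$ in $y$, take two $h$-derivatives of Eq.~\eqref{116a} on an open half-line, substitute $\gam\jh''=\theta-F$, $\gam\jh'''=-F'$, $\gam\jh''''=-F''$, and collapse the reaction term with $\theta^2=\theta$; the only difference is that you write out the algebra the paper leaves implicit, and every step, including the reduction to $\frac{\dot\gam}{\gam}F(1-F)$, checks out.
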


\begin{proof}
Write $\mathcal{R}$ for the right-hand side of Eq.~\eqref{116a}
for $\jh$. Differentiating $F=\theta-\gam\jh''$ in $y$ and using
$\partial_{y}\jh=\mathcal{R}$:
\[
\partial_{y} F = -\dot\gam\,\jh''
- \gam\, \mathcal{R}''.
\]
Compute $\mathcal{R}''$ on each open half-line
(where $\theta=\theta(-h)$ is locally constant). Substitute
${\jh''=(\theta-F)/\gam}$, $\jh'''=-F'/\gam$, $\jh''''=-F''/\gam$.
Simplification and use of $\theta^{2}=\theta$
yields Eq.~\eqref{eq:FKPP}. 
\end{proof}

Equation~\eqref{eq:FKPP} is a \emph{Fisher--KPP-type
reaction--diffusion equation}~\cite{Fisher,KPP,AronsonWeinberger,Murray}.
For the fullRSB profile $\gam(y)\sim\gam_{\infty}y^{-c}$ with
$c\in(0,1)$, we have $\dot\gam<0$. The flow evolves from
$y=1/m$ (deepest level, initial condition) down to $y=1$. Let
$\tau:=1/m-y$ so that $\tau$ increases from $0$ as $y$ decreases.
Then Eq.~\eqref{eq:FKPP} becomes
\begin{equation}\label{eq:FKPPtau}
\partial_{\tau} F
\;=\; \frac{|\dot\gam|}{2y}\,F''
\;+\; |\dot\gam|\!\left(\jh' - \frac{h\,\theta(-h)}{\gam(y)}\right)\!F'
\;+\; \frac{|\dot\gam|}{\gam}\,F(1-F),
\end{equation}
a forward-parabolic equation with a positive diffusion coefficient
$|\dot\gam|/(2y)>0$ and positive reaction coefficient $|\dot\gam|/\gam>0$.
This is the standard logistic-growth-with-diffusion equation, familiar from
population biology~\cite{Murray} and reaction wave theory~\cite{Saarloos}.

\subsection{Maximum principle preserves $F \in [0,1]$}

At $y=1/m$ (taking $m\to 0$ for the jamming limit), the
initial condition is $\jh(1/m,h)\to 0$~\cite[Eq.~(113)]{CKPUZ-III}, 
so $\ft(1/m,h)\to-\tfrac12 h^{2}\theta(-h)$
and
\[
F(\tau=0,h) \;=\; \theta(-h) \;\in\; \{0,1\}.
\]
In particular $F(0,h)\in[0,1]$
pointwise.

The PDE in Eq.~\eqref{eq:FKPPtau} admits $F\equiv 0$ and $F\equiv 1$ as
stationary solutions:\ at both constants $F'=F''=0$ and the
reaction $F(1-F)=0$. Standard parabolic
comparison~\cite{ProtterWeinberger,Evans} then yields:

\begin{proposition}\label{prop:maxprinciple}
If $F(\tau=0,h)\in[0,1]$ pointwise in~$h$, and $F$ satisfies the
PDE in Eq.~\eqref{eq:FKPPtau} (with smooth coefficients on each open
half-line and bounded $\jh'$), then $F(\tau,h)\in[0,1]$ for all
$\tau\ge 0$ and all $h\in\Real$.
\end{proposition}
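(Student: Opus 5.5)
The plan is a standard parabolic comparison argument using the constant solutions $0$ and $1$ as sub- and super-solutions, done separately for each bound. Write Eq.~\eqref{eq:FKPPtau} as $\partial_\tau F = A(\tau,h)F'' + B(\tau,h)F' + C(\tau)F(1-F)$, with $A=|\dot\gam|/(2y)>0$, $B=|\dot\gam|(\jh'-h\theta(-h)/\gam)$ and $C=|\dot\gam|/\gam>0$. On any finite interval $\tau\in[0,T]$ these coefficients are bounded, except possibly the drift term $-h\theta(-h)|\dot\gam|/\gam$, which grows linearly in $|h|$; this is where the hypothesis of bounded $\jh'$ is used.

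For the lower bound, set $G(\tau,h)=e^{-\lambda\tau}F$ with $\lambda$ large. Since $F(1-F)=F\cdot(1-F)$, the reaction term is linear in $F$ with coefficient $C(1-F)$, which is bounded on $[0,T]$ as long as $F$ is bounded. Suppose $\inf F<0$ at some time. Take the first time $\tau_*$ at which $\min_h G$ reaches $-\varepsilon$. At the minimizing point we have $G'=0$, $G''\ge0$ and $\partial_\tau G\le0$. The equation then gives $\partial_\tau G = A G'' + (C(1-F)-\lambda)G \ge (C(1-F)-\lambda)G>0$, a contradiction once $\lambda>\sup C(1-F)$.

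The upper bound works the same way applied to $1-F$, which satisfies $\partial_\tau(1-F)=A(1-F)''+B(1-F)'-CF\,(1-F)$. This equation again has a reaction term linear in $1-F$ with a bounded coefficient, so we get $1-F\ge0$.

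Two technical points have to be handled. The first is that the minimum must actually be attained. For this I would use the known asymptotics $F\to\theta(-h)$ as $|h|\to\infty$ (from the asymptotics of $\ft$ in Sec.~\ref{sec:fg-origin}), so $F$ lies in $[0,1]$ up to $o(1)$ far out. To be safe, I would also add a barrier $\varepsilon(1+h^2)e^{K\tau}$ to $G$, with $K$ chosen to absorb the linearly growing drift acting on $h^2$: it produces a term of size $|h|\cdot|h|$, which is controlled by $Kh^2$. Then let $\varepsilon\to0$.

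The second point, which I expect to be the main obstacle, is the interface $h=0$. There the coefficients jump, so the equation holds only on the open half-lines. I would appeal to Lemma~\ref{lem:smooth}'s analogue for $F$: the cancellation of the jumps of $\theta$ and $\gam\jh''$ makes $F$ $C^1$ in $h$ at $h=0$, and the drift term carries the factor $h$, so it is continuous there. If a negative minimum sits exactly at $h=0$, then $F'(\tau,0)=0$. One-sided second derivatives exist, and both are $\ge0$ because $0$ is a minimum. The equation evaluated as a one-sided limit from either side then gives the same contradiction, since the $h\theta$ term vanishes at $h=0$.

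Finally, the initial datum $\theta(-h)$ is discontinuous at $h=0$. I would treat it by starting at $\tau=\delta>0$, using continuity of $F$ in $L^\infty_{\rm loc}$ as $\delta\to0$, or by mollifying the initial datum and invoking stability. The bounds $0\le F\le1$ are closed under these limits, which gives Proposition~\ref{prop:maxprinciple}.
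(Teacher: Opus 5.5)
Your proposal is correct and is the paper's argument, a first-touch maximum principle against the stationary solutions $F\equiv0$ and $F\equiv1$, with the details the paper's sketch calls ``standard'' actually supplied: the weight $e^{-\lambda\tau}$ (more robust than the paper's $F\pm\epsilon\tau$, which for the lower bound only beats the negative reaction term $CF(1-F)$ on time steps shorter than about $1/\sup C$), the barrier $\varepsilon(1+h^2)e^{K\tau}$ for the unbounded domain and linearly growing drift, and the $C^1$ matching of $F$ across $h=0$, which the statement tacitly needs because the PDE is imposed only on the open half-lines. The one slip is your first option for the initial datum: a continuous $F(\delta,\cdot)$ cannot converge to the discontinuous $\theta(-h)$ in $L^\infty_{\mathrm{loc}}$, so drop it and use the mollification-plus-stability route instead.
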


\begin{proof}[Sketch of the proof.]
Upper bound:\ suppose $F$ first touches the value $1$ at some
$(\tau_{*},h_{*})$. There $F=1$, $F'=0$, $F''\le 0$, reaction
$=0$. The right-hand side of Eq.~\eqref{eq:FKPPtau} is therefore $\le 0$,
so $\partial_{\tau}F\le 0$, preventing $F$ from exceeding $1$.
Lower bound:\ at the first touch of $0$, $F'=0$, $F''\ge 0$,
reaction $=0$, so the right-hand side is $\ge 0$ and
$\partial_{\tau}F\ge 0$, preventing $F$ from dipping below~$0$.

The full argument requires a perturbation $F_{\epsilon}=F\pm\epsilon\tau$
to convert the weak inequalities into strict ones, plus a verification
that the boundary conditions $F(\tau,h)\to 0,1$ as $h\to\pm\infty$ are
preserved by the flow.  Both are standard for parabolic equations
of this type~\cite{ProtterWeinberger,Evans,Friedman}.
\end{proof}

Combining the bound $F(y,h)\in[0,1]$ from Proposition~\ref{prop:maxprinciple}
with the scaling limit $F(y,h)\to f(t)$ gives
\[
f(t) \in [0,1] \quad\text{for all }t\in\Real.
\]
This is the no-node branch ($n=0$) of the equation for $J$, identified by the fullRSB flow itself as the unique physical solution.

\section{Putting it together}
\label{sec:together}

\begin{theorem}\label{thm:main}
For the fullRSB critical exponents defined in
CKPUZ~\cite[Sec.~IX]{CKPUZ-III}, on the physical branch (the no-node
branch of the equation for $J$, $0\le f\le 1$), one has
\begin{equation}\label{eq:main}
\Astar + \frac{\Cstar}{2} = \frac{1}{2}.
\end{equation}
Equivalently, $\Astar+b=1$, in the form conjectured by~\cite{CKPUZ-III}.
\end{theorem}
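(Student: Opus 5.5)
The plan is to combine the three ingredients already established. First, Theorem~\ref{thm:algid} applies at $c=\Cstar$ with $p_1$ the physical (positive, ground-state) eigenfunction and eigenvalue $a=\Astar$. It gives $cD=2cN+(b-a-c)K$. Using $b=(1+c)/2$, we have $b-a-c=(1-c)/2-a=1-a-b$. Second, the marginal-stability identity Eq.~\eqref{eq:NoverD}, $D=2N$, holds precisely at $c=\Cstar$. Substituting it into Eq.~\eqref{eq:algid-thm} gives $(1-\Astar-b)K=0$. Since $1-a-b=\tfrac12-a-\tfrac c2$, the theorem follows once $K\neq 0$.

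The remaining work is to prove $K=\int p_1 f g\,\dd t>0$. I would use $0\le f\le 1$ on the physical branch: the upper bound comes from Lemma~\ref{lem:fu1}, and the lower bound from the Fisher--KPP maximum principle (Proposition~\ref{prop:maxprinciple}) passed to the scaling limit. Then $fg=f(1-f)\ge 0$ pointwise. By Remark~\ref{rem:p1pos}, $p_1>0$ everywhere. So it suffices that $fg$ is not identically zero, and then continuity of $f$ (Lemma~\ref{lem:smooth}) makes $fg>0$ on an open interval.

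Nonvanishing follows from the intermediate value theorem. The function $f$ is continuous with $f\to1$ at $-\infty$ and $f\to 0$ at $+\infty$ (Lemma~\ref{lem:decay}), so there is $t_0$ with $f(t_0)=\tfrac12$. Hence $f(t_0)g(t_0)=\tfrac14>0$, and by continuity $fg>\tfrac18$ on a neighbourhood of $t_0$. Multiplying by $p_1>0$ and integrating gives $K>0$, with convergence of $K$ guaranteed by the super-exponential decay in Lemma~\ref{lem:decay}. Therefore $1-\Astar-b=0$, i.e.\ $\Astar+\Cstar/2=\tfrac12$.

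The algebra is immediate given the earlier results. The step I expect to be most delicate is the legitimacy of the lower bound $f\ge 0$, i.e.\ passing $F\in[0,1]$ from the flow to the scaling function $f$. This requires pointwise convergence $F(y,h)\to f(t)$ at fixed $t$, which preserves closed bounds, so I would simply invoke that convergence as part of the assumed scaling framework.

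One subtlety is worth noting: the argument does not actually need $f\ge0$ everywhere. It is enough that $fg\ge0$ holds and is not identically zero, and only the sign of $fg$ matters. Strictly, even $K\neq0$ would suffice.
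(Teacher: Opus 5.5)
Your proposal is correct and follows the paper's proof essentially step for step. Both combine the algebraic identity of Theorem~\ref{thm:algid}, the marginal-stability closure $D=2N$, and $K>0$, which comes from $0\le f\le 1$ (Lemma~\ref{lem:fu1} plus the Fisher--KPP maximum principle), strict positivity of $p_1$, and the intermediate value theorem. For the last step you pick $t_0$ with $f(t_0)=\tfrac12$ and use continuity of $f$. This is slightly cleaner than the paper's appeal to a ``continuous monotone passage'' of $f$ from $1$ to $0$, because the paper never establishes that $f$ is monotone.
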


\begin{proof}
By Theorem~\ref{thm:algid}, $cD=2cN+(b-a-c)K$ holds for any solution of Eq.~\eqref{eq:p1eq}.
By Sec.~\ref{sec:fkpp}, on the physical branch $f\in[0,1]$ and
$fg\ge 0$ pointwise; the intermediate value theorem applied to the
continuous monotone passage of $f$ from $1$ to $0$ yields an open
interval where $0<f<1$, hence $fg>0$ on that interval. The strict
positivity of $p_{1}>0$, which follows from Remark~\ref{rem:p1pos},
gives
$K=\int p_{1}fg\,\dd t>0$.

The CKPUZ marginal-stability identity in Eq.~\eqref{eq:NoverD} gives
$N/D=1/2$, i.e., $cD-2cN=0$. Combined with the algebraic identity,
$(b-a-c)K=0$. With $K>0$, $b-a-c=0$, hence $a=(1-c)/2$,
equivalent to Eq.~\eqref{eq:main}.
\end{proof}

\section{Discussion and outlook}
\label{sec:discussion}

Theorem~\ref{thm:main} closes the gap in the CKPUZ scaling
theory~\cite{CKPUZ-III} between the exponent definitions and the
scaling relations of Wyart and collaborators~\cite{Wyart12,Lerner13,DeGiuli14}.
The scaling relations $a+b=1$ and $b=(1+c)/2$ imply
\begin{equation}\label{eq:chain}
\alpha=\frac{1}{2+\theta},\;\;\quad\kappa=2-\frac{2}{3+\theta},
\end{equation}
using $\alpha=a/b$, $\theta=(c-a)/(b-c)$, $\kappa=c+1$. The
scaling relations in Eq.~\eqref{eq:chain} are precisely those of mechanical
marginal stability~\cite{Wyart12,Lerner13,DeGiuli14}, derived
independently of the statistics of contact networks and force
balance under the assumption that the packings sit at the edge of
mechanical stability. This shows the equivalence
of \emph{phase-space} marginal stability (the replicon-zero condition
of fullRSB) and \emph{mechanical} marginal stability (the
zero-mode-density of the contact-network Hessian) within
the $d\to\infty$ theory. The two notions are conceptually
distinct — one is a property of the fullRSB saddle, the other of the
configuration space of the disordered packing — but they
coincide at the level of critical exponents.

Several points remain open or beyond our scope.

(i) \emph{Existence and uniqueness of the fullRSB profile.}
We have taken as given the existence of the fullRSB profile $\gam(y)$
satisfying Eq.~\eqref{116c} along with the corresponding $\jh,\Pb$. This is
analogous to the SK case, where rigorous existence
follows from the work of Talagrand~\cite{Talagrand} and
Panchenko~\cite{Panchenko} — extending these results to the
hard-sphere setting in $d\to\infty$ would close this gap.

(ii) \emph{Rigorous derivation of the scaling regime.}
The matching-region scaling ansatz in Eq.~\eqref{eq:scaling} is justified
in CKPUZ~\cite{CKPUZ-III} by analysis and confirmed
numerically; a fully rigorous derivation as an asymptotic expansion
of Eqs.~\eqref{116a}-\eqref{116c} would be desirable.

(iii) \emph{Existence of the no-node branch.}
Proposition~\ref{prop:maxprinciple} guarantees that the
two-variable function $F(y,h)$ stays in $[0,1]$, and via the
scaling limit this gives a candidate $f\in[0,1]$. A complete
existence proof would require showing that the scaling limit is
attained — i.e., that the family $F(y,\cdot)$ at large $y$ has a
non-trivial limit in the matching region with the prescribed scaling.
We have not undertaken this.

(iv) \emph{The mechanical-stability bridge.}
Our proof shows that the scaling relations in Eq.~\eqref{eq:wyart} are
\emph{implied} by phase-space marginality. It does not address the
converse:\ given mechanical marginality, can one derive the fullRSB
structure at least in $d\to\infty$? Such a derivation would give a complete bridge
between the two notions and is, we believe, an interesting open
problem.

Why have we not seen the proof? Difficult to say. We have not even tried to use this approach. We thought there was a deep, hidden, direct relationship between the functions $p_1(t)$ and $J(t)$ that we were unable to see. We were looking for something deeper, and we neglected the conceptually simple case (hard to see due to the many algebraic cancellations).  

\paragraph{Acknowledgements.}
We thank Patrick Charbonnau, Jorge Kurchan and Pierfrancesco Urbani for previous collaboration on which this work is based.

\end{document}